\title{Incentive-Compatible Diffusion Auctions}
\author{
Bin Li$^1$\and
Dong Hao$^1$\footnote{Corresponding author.}\and
Dengji Zhao$^2$\\
\affiliations
$^1$University of Electronic Science and Technology of China\\
$^2$ShanghaiTech University\\
\emails
\{libin@std.uestc,
haodong@uestc,
zhaodj@shanghaitech\}.edu.cn
}
\begin{document}

\maketitle

\begin{abstract}
Diffusion auction is a new model in auction design. It can incentivize the buyers who have already joined in the auction to further diffuse the sale information to others via social relations, whereby both the seller's revenue and the social welfare can be improved. Diffusion auctions are essentially nontypical multidimensional mechanism design problems and agents' social relations are complicatedly involved with their bids. In such auctions, incentive-compatibility (IC) means it is best for every agent to honestly report her valuation and fully diffuse the sale information to all her neighbors. Existing work identified some specific mechanisms for diffusion auctions, while a general theory characterizing all incentive-compatible diffusion auctions is still missing. In this work, we identify a sufficient and necessary condition for all dominant-strategy incentive-compatible (DSIC) diffusion auctions. We formulate the monotonic allocation policies in such multidimensional problems and show that any monotonic allocation policy can be implemented in a DSIC diffusion auction mechanism. Moreover, given any monotonic allocation policy, we obtain the optimal payment policy to maximize the seller's revenue.
\end{abstract}
\section{Introduction}
\newtheorem{theorem}{Theorem}
\newtheorem{defn}{Definition}
\newtheorem{lemma}{Lemma}
\newtheorem{corollary}{Corollary}
\newtheorem{claim}{Claim}

\newtheorem{prop}{Proposition}

In traditional auctions, the bidders are given no incentives to spread the sale information to others since they are in perfect competition. This not only ends the sale with a low revenue, and even worse, may lead to an inefficient allocation. For instance, suppose a seller is selling an item on eBay. Visitors who browse the sale may join in the sale and put their bids. The strike price of the item depends on the number of bidders and their posted bids. In order to involve more participants in an auction, the seller can advertise the sale via sponsored search or social media, etc. However, if the advertisements do not bring any buyers valuable enough, the investment on the advertisements could probably be wasted.

To tackle this problem, a new thread in auction design which is called \emph{diffusion auction} comes to the fore. This kind of auction mechanisms can incentivize the early joined participants to share the sale information to the uninformed individuals via, for example, online social networks. \cite{li2017mechanism} initiated this problem in the scope of social networks. They showed that the classic VCG mechanism \cite{vickrey1961counterspeculation,clarke1971multipart,groves1973incentives} can be directly extended to the social network setting, but the seller may suffer a loss by implementing this mechanism. To overcome the low revenue problem, they proposed the information diffusion mechanism (IDM) which is not efficient but guarantees the seller's revenue.
In some particular economic networks, efficiency and revenue could both be guaranteed by a natural market model called customer sharing mechanism \cite{Li2018CustomerSI}. \cite{li2019graph} further identified a class of diffusion mechanisms for any unweighted graphs, and IDM is a special case in this class with the lowest revenue. They also established a mechanism for weighted graphs.


All of the above-mentioned auctions are specific cases of incentive-compatible (IC) diffusion auctions. Nevertheless, general characterizations of incentive-compatible diffusion auctions are still missing in the literature. The most obvious advantage of incentive-compatible mechanisms is that they are easy for the participants to play and easy for the designer to predict the outcome. Moreover, according to the revelation principle \cite{myerson1979incentive}, although honest reporting may not be the only Nash equilibrium, it is without loss of generality to focus on incentive-compatible mechanisms. For diffusion auction design, since each buyer can only share the auction information to some subset of her neighbors, the argument form that is used to prove the revelation principle also holds. In this work, we characterize a \emph{sufficient and necessary condition} for incentive-compatible diffusion auctions. We prove that natural \emph{monotonic allocation} policies can be implemented in dominant-strategy incentive-compatibility (DSIC).
Moreover, given any monotonic allocation policy, we also obtain the \emph{optimal payment} policy, i.e., the one that maximizes the seller's revenue.

The characterization of incentive-compatibility can be traced back to value monotonicity \cite{Myerson1981OptimalAD} and is later rediscovered by \cite{archer2001truthful}. Value monotonicity has been widely used to obtain incentive-compatible mechanisms for single-parameter domains where each agent's preference over outcomes can be captured by a single parameter \cite{archer2001truthful,briest2005approximation,auletta2004deterministic,andelman2005truthful,kovacs2005fast}. A generalization of value monotonicity into arbitrary domains is cycle monotonicity \cite{rochet1987necessary}. However, cycle monotonicity is overly complicated and is rarely applied to argue incentive-compatibility. It is known that for convex domains, cycle monotonicity can be replaced by a simpler condition, called weak monotonicity \cite{lavi2003towards,bikhchandani2006weak,shmoys1993approximation}. But even this simpler condition does not have much application in mechanism design for multidimensional problems. When the domain of preferences is unrestricted and there are at least three possible outcomes, \cite{roberts1979characterization} shows that the only incentive-compatible mechanisms are variations of the VCG mechanism.
In the diffusion auction scenario, besides posting in a bid, each bidder is allowed to strategically diffuse the sale information to others. Therefore, this is a nontypical multidimensional problem where one domain is the bid and the other domain could be any subset (even an empty set) of edges of a graph. As we will show in the following parts, value monotonicity is not suffice to incorporate diffusion incentives and we develop new techniques for analyzing the incentive-compatibility of diffusion auctions.

Section 2 formulates the basic model of diffusion auction. Section 3 provides a general characterization of incentive-compatible diffusion auctions. In Section 4, we elaborate on a class of natural allocation policies and derive the optimal payment policies for such allocations.

\section{Preliminaries}
Consider a digraph $G=(N,E)$, where $N$ is the node set and $E$ is the edge set, assume there is a \emph{seller} $s\in N$ who wants to sell an indivisible item. Each node $i\in N\setminus\{s\}$ refers to a potential buyer and each pair of nodes can communicate through an edge in $E$. The most typical examples of such graphs are social networks where each person can only directly communicate with her neighbors, e.g., friends, relatives. Initially, only the seller's neighbors are informed of the sale. A node can participate in the sale if and only if one of her neighbors is aware of the sale and further invites her to join the sale. The seller wants to incentivize the early participants to share the sale information to the uninformed nodes, whereby all nodes in the graph can join the sale. This can not only increase the seller's revenue, but also improve the allocation efficiency. For convenience, we define $N_{-s}$ as $N\setminus\{s\}$ and call a node in $N_{-s}$ a \emph{buyer}.

The above problem can be modeled as an auction. Formally, denote buyer $i$'s privately known \emph{type} by $t_i=(v_i,r_i)$, where $v_i$ is her valuation of the item and $r_i$ is the set of all her direct neighbors. Let ${\bf t}=(t_1,\cdots,t_n)$ denote the type profile of all buyers and ${\bf t}_{-i}=(t_1,t_2,\cdots,t_{i-1},t_{i+1},\cdots,t_n)$ denote the type profile of all buyers except $i$, i.e., ${\bf t}=(t_i,{\bf t}_{-i})$. Let $T=\times T_{i\in N_{-s}}$ be the type profile space of all buyers, where $T_i=\mathcal{R}_{\geq 0}\times \mathcal{P}(N)$ is buyer $i$'s type space and $\mathcal{P}(N)$ is the power set of $N$. Since $t_i$ is private information, $i$ can strategically misreport to affect the outcome. 
Let $t_i'=(v'_i,r'_i)\in T_i$ be the reported type of buyer $i$ where $r'_i\subseteq r_i$ means that $i$ diffuses the sale information to a subset $r'_i$ of her neighbors. Since buyers can not diffuse information to a non-existing neighbor, we have $r'_i \in \mathcal{P}({r_i})$. Denote a \emph{feasible} reported type profile of all buyers by ${\bf t'}=(t_1',\cdots,t_n')$, where feasibility requires that $t'_i=nil$ if $i$ has never been informed of the sale or she decides not to join the sale. We only consider feasible reported type profiles.


\begin{defn}
	A diffusion auction mechanism $\mathcal{M}=(\pi,x)$ on graph $G$ consists of two components: an {allocation policy} $\pi = \{\pi_i\}_{i\in N_{-s}}$ and a {payment policy} $x=\{x_i\}_{i\in N_{-s}}$, where $\pi_i:T\rightarrow \{0,1\}$ and $x_i:T\rightarrow \mathcal{R}$ are the allocation and payment functions for $i$, respectively.
\end{defn}

Given buyers' reported type profile ${\bf t}^\prime$, $\pi_i({\bf t}^\prime) = 1$ means that $i$ gets the item, while $i$ does not get the item if $\pi_i({\bf t}^\prime) = 0$. $x_i({\bf t}^\prime) \geq 0$ indicates that $i$ pays the seller $x_i({\bf t}^\prime)$, and $i$ receives $|x_i({\bf t}^\prime)|$ from the seller if $x_i({\bf t}^\prime) < 0$. We say an allocation policy $\pi$ is \emph{feasible} if for all ${\bf t}' \in T$, $\sum_{i\in N_{-s}} \pi_i({\bf t}') \leq 1$, and if $\pi_i({\bf t}') = 1$, then $t'_i \neq nil$. That is, a feasible allocation can only allocate the item to at most one participant. Let $\Pi$ be the set of all feasible allocations. In what follows, we only consider feasible allocations. Given an allocation policy $\pi$ and a type profile ${\bf t}$, the \emph{social welfare} is define as $W(\pi,{\bf t})=\sum_{i\in N_{-s}} \pi_i({\bf t})v_i$. An allocation policy ${\pi^*}$ is efficient if for all ${\bf t}\in T$ it always maximizes $W({\pi^*},{\bf t})$. 
\begin{defn}
	An allocation ${\pi^*}$ is {efficient} if for all ${\bf t} \in T$, ${\pi^*} \in {\arg\max}_{{\pi^\prime} \in \Pi} W(\pi', {\bf t})$.
\end{defn}

Given buyer $i$ with truthful type $t_i = (v_i, r_i)$, a reported type profile ${\bf t}^\prime$ and a mechanism $(\pi, x)$, the quasilinear \emph{utility} of $i$ is defined as: $u_i(t_i, {\bf t}^\prime, (\pi, x)) = \pi_i({\bf t}^\prime)v_i  - x_i({\bf t}^\prime).$ We say a diffusion auction is individually rational if for each buyer, her utility is non-negative when she truthfully reports her valuation, no matter to which subset of neighours she diffuses the auction information and what the others do.
\begin{defn}
	A diffusion auction mechanism $(\pi, x)$ is {individually rational} (IR) if $u_i(t_i, ((v_i, r_i^\prime),{\bf t}_{-i}^\prime), (\pi, x)) \geq 0$ for all $i\in N_{-s}$, all $r_i^\prime \in \mathcal{P}(r_i)$, and all ${\bf t}_{-i}^\prime \in T_{-i}$.
\end{defn}
In diffusion auctions, IR ensures that no matter how a buyer diffuses the information, as long as she honestly reports her value, she will not suffer loss. Besides IR, now we define the incentive-compatibility (IC) as: for each buyer, reporting her true valuation and, meanwhile, fully diffusing the sale information to all her neighbors is always a dominant strategy, no matter what the others do.
\begin{defn}
	A diffusion auction mechanism $(\pi, x)$ is {incentive-compatible} (IC) if
	$u_i(t_i, (t_i,{\bf t}_{-i}^\prime), (\pi, x)) \geq u_i(t_i, (t_i^\prime, {\bf t}_{-i}^{\prime\prime}), (\pi, x))$ for all $i\in N_{-s}$, all $t_i^\prime \in T_i$, and all ${\bf t}_{-i}^\prime \in T_{-i}$.
\end{defn}
Note that on the right-hand side of the inequality, $(t_i, {\bf t}_{-i}^{\prime})$ is replaced by $(t_i^\prime, {\bf t}_{-i}^{\prime\prime})$. This is because if $i$ does not spread the auction information to all her neighbors, then some buyers who can receive the information under $t_i$, may no longer be able to receive it. Thus under $t_i^\prime$, the feasible type profile of buyers except $i$ is changed from ${\bf t}_{-i}^{\prime}$ to ${\bf t}_{-i}^{\prime\prime}$.

Now we formulate the diffusion auctions' criteria with respect to the seller. Given a reported type profile ${\bf t}^\prime$ and a diffusion auction mechanism $\mathcal{M} = (\pi, x)$, the seller's \emph{revenue} generated by $\mathcal{M}$ is defined as the sum of all buyers' payments $Rev^{\mathcal{M}}({\bf t}^\prime) = \sum_{i\in N_{-s}} x_i({\bf t}^\prime)$.
\begin{defn}
	A diffusion auction mechanism $\mathcal{M}=(\pi, x)$ is {(weakly) budget balanced} if for all ${\bf t}' \in T$, $Rev^{\mathcal{M}}({\bf t}') \geq 0$.
\end{defn}

Over the past few years, following similar models as the above, several diffusion auction mechanisms have been proposed, while a general theory for diffusion auctions is missing. In the following sections, we will establish such a theory to fill the gap.

\section{Incentive-Compatible Diffusion Auctions}
We say an allocation policy $\pi$ is \emph{implementable} if there is a payment policy $x$ such that the mechanism $(\pi,x)$ is IR and IC, and say it is value-monotonic if a winner cannot become a loser by increasing her bid. Given a value-monotonic allocation policy, there exists a critical bid for buyer $i$, above/below which $i$ wins/loses the item, i.e., each buyer's critical bid is the minimum bid that makes her win.

The most simple scenario is that seller $s$ connects to all buyers in $N_{-s}$. Since the seller knows all buyers in advance, it reduces to the traditional single-item auction design. General theories for this case have been well developed. Many well-known works \cite{Myerson1981OptimalAD,lehmann2002truth,archer2001truthful,mu2008truthful} show that a normalized mechanism (i.e., losers always pay zero) for single-parameter domains is IC if and only if its allocation policy is value-monotonic, and the winner pays exactly her critical bid. The most well-know theory for this case is Myerson's Lemma \cite{Myerson1981OptimalAD}.



\begin{theorem}[Myerson's Lemma]\label{myerson}
	Fix a single-item auction environment without information diffusion.
	\begin{itemize}
		\item[(a)] 	An allocation policy $\pi$ is implementable if and only if it is value-monotonic.
		\item[(b)] If $\pi$ is value-monotonic, then there is a unique payment policy $x$ for which $(\pi,x)$ is IR and IC and the winner pays her critical bid and the losers pay zero.
	\end{itemize}
\end{theorem}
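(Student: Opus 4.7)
The plan is to prove each direction of part (a) separately and then obtain part (b) as a byproduct of the sufficiency construction, since in the single-item, single-parameter setting the payment is pinned down by the allocation up to the normalization that losers pay zero.

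For the necessity direction of (a), I would fix buyer $i$ together with everyone else's reports ${\bf t}_{-i}$ and compare two valuations $v_i < v_i'$ with $r_i' = r_i$. Writing down the IC constraint twice, once assuming $v_i$ is truthful and once assuming $v_i'$ is truthful, and adding the two inequalities produces the standard telescoping bound $(\pi_i(v_i') - \pi_i(v_i))(v_i' - v_i) \ge 0$. Because the allocation is $\{0,1\}$-valued, this forces $\pi_i(v_i) \le \pi_i(v_i')$, which is exactly value monotonicity: a winner cannot become a loser by raising her bid.

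For the sufficiency direction of (a) and the existence part of (b), I would start from a value-monotonic $\pi$. Fixing ${\bf t}_{-i}$, monotonicity implies the set of winning bids for $i$ is upward-closed, so it has an infimum $c_i({\bf t}_{-i})$, the critical bid. Define $x_i = c_i$ whenever $\pi_i = 1$ and $x_i = 0$ otherwise. Then IR follows because a winner's utility is $v_i - c_i \ge 0$ and a loser's utility is $0$. For IC, a short case analysis splits on whether $v_i$ and the deviation $v_i'$ lie above or below $c_i$: a winner who deviates to a losing bid forfeits nonnegative utility, and a loser who deviates to a winning bid incurs nonpositive utility, so truthful bidding is a best response.

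For the uniqueness part of (b), I would argue that any IR and IC normalized payment rule implementing the same $\pi$ must coincide with the critical-bid rule. Losing types pay $0$ by the normalization; for a winning type $v_i \ge c_i$ the payment cannot exceed $c_i$ (otherwise a buyer at $v_i = c_i$ violates IR) and cannot fall below $c_i$ (otherwise a losing type just below $c_i$ would strictly prefer to overbid, contradicting IC). The main subtlety is the boundary behavior at $v_i = c_i$, which is handled by the usual tie-breaking convention and continuity of the payment in the value direction; no delicate multidimensional argument is needed here because the diffusion coordinate $r_i$ is absent in this base case, which is precisely why the subsequent sections must develop extra machinery.
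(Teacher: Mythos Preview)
Your proof is a correct rendition of the classical Myerson argument, but there is nothing to compare it against: the paper does not prove Theorem~\ref{myerson}. It is quoted as background (attributed to \cite{Myerson1981OptimalAD} and related references) precisely to set up the contrast with the diffusion setting, where the paper then develops its own machinery in Lemmas~\ref{p1}--\ref{p4} and Theorem~\ref{maint}. So the ``paper's own proof'' of this statement is simply a citation.

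One minor remark on your write-up: in the uniqueness argument for (b) you appeal to ``the normalization that losers pay zero'' and to ``continuity of the payment in the value direction,'' but neither is an explicit hypothesis of the statement as phrased. The theorem as written bakes the conclusion ``losers pay zero'' into the description of the unique payment rather than assuming it as a normalization, so strictly speaking you should derive loser-pays-zero from IR plus IC (a loser at value $0$ must have nonnegative utility, hence $x_i\le 0$; and a winner just above $c_i$ must not prefer to mimic a loser, forcing $x_i\ge 0$ for losers). This is a presentational nit rather than a genuine gap.
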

However, Myerson's Lemma does not apply in our general setting.
First of all, although critical bid is a powerful tool of characterizing the payment policy in traditional auctions, the characterization of it is not straightforward in diffusion auctions. In single-parameter domains, critical bid is well defined which depends only on other bidders' bids. For example, in the second price auction, each bidder's critical bid equals the highest bid of the other buyers which is not affected by herself. Unfortunately, this argument is no longer valid in diffusion auctions, since any buyer $i$ now can affect others' report ${\bf t}'_{-i}$ through strategic diffusion. Secondly, the losers' payoffs in a diffusion auction can be positive and can vary from one bidder to another \cite{li2019graph}.

In diffusion auctions, any buyer $i$'s diffusion strategy can affect the allocations. Given any $r_i'\subseteq r_i$ there is a minimum winning bid for buyer $i$. That is, each buyer $i$'s critical bid should be a function of her diffusion strategy $r_i'$. Define the critical bid in diffusion auctions as follows.
\begin{defn}
	Given an allocation policy $\pi$ and other buyers' reports ${\bf t}'_{-i}$, buyer $i$'s critical bid with respect to $r_i'\subseteq r_i$ is $v^*_i(r_i')=\arg \min_{b_i'\in \mathcal{R}_{\geq 0}}\{\pi_i((b_i',r'_i),{\bf t}'_{-i})=1\}$.
\end{defn}

Note that under some special allocation policies, a buyer cannot win no matter what she does. An immediate evidence is the 'unlucky buyers' identified in \cite{li2017mechanism}. It is a special class of buyers who cannot affect the allocations no matter how high they bid and which subset of neighbors they diffuse the sale information to. Whenever in these cases, set $v^*_i(r_i')=\infty$ for all $r_i'\subseteq r_i$. Whether such buyers exist is jointly determined by the structure of the graph and the designed allocation policy.


For single-item auctions, a buyer either wins the item or not. Hence, a buyer's payment policy $x_i({\bf t'})$ can be decoupled into two components: the payment for winning the item $x_i({\bf t'}|\pi_i({\bf t'})=1)$ and the payment for losing it $x_i({\bf t'}|\pi_i({\bf t'})=0)$. Let $\tilde{x}_i(t_i',{\bf t}'_{-i})=x_i({\bf t'}|\pi_i({\bf t'})=1)$ and $\overline{x}_i(t_i',{\bf t}'_{-i})=x_i({\bf t'}|\pi_i({\bf t'})=0)$, respectively. The following proposition is straightforward from this definition and will be very useful to explore the underlying structures of diffusion auction design.
\begin{prop}[Payment Decoupling]\label{Payment-Decoupling}
Given a mechanism $(\pi,x)$, for all $i$ and all ${\bf t}'_{-i}$, the payment for $i$ can be decoupled as $x_i({\bf t'})=\pi_i({\bf t'})\tilde{x}_i(t_i',{\bf t}'_{-i})+(1-\pi_i({\bf t'}))\overline{x}_i(t_i',{\bf t}'_{-i})$. 
\end{prop}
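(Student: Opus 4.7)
The plan is to prove Proposition \ref{Payment-Decoupling} by a direct case analysis on the binary value of $\pi_i({\bf t}')$, since by the definition of a diffusion auction mechanism the allocation function takes values only in $\{0,1\}$.

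First, I would fix an arbitrary buyer $i$ and an arbitrary reported profile ${\bf t}' = (t_i', {\bf t}'_{-i})$, and observe that exactly one of the two cases $\pi_i({\bf t}') = 1$ or $\pi_i({\bf t}') = 0$ must hold. In the first case, by the definition of the conditional payment $\tilde{x}_i(t_i',{\bf t}'_{-i}) = x_i({\bf t}' \mid \pi_i({\bf t}')=1)$, we have $\tilde{x}_i(t_i',{\bf t}'_{-i}) = x_i({\bf t}')$, so the right-hand side of the claimed identity evaluates to $1 \cdot x_i({\bf t}') + 0 \cdot \overline{x}_i(t_i',{\bf t}'_{-i}) = x_i({\bf t}')$, matching the left-hand side. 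In the second case, symmetrically, $\overline{x}_i(t_i',{\bf t}'_{-i}) = x_i({\bf t}')$ by definition, and the right-hand side becomes $0 \cdot \tilde{x}_i(t_i',{\bf t}'_{-i}) + 1 \cdot x_i({\bf t}') = x_i({\bf t}')$, again matching.

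Since these two cases exhaust all possibilities for $\pi_i({\bf t}')$, the identity $x_i({\bf t}') = \pi_i({\bf t}')\tilde{x}_i(t_i',{\bf t}'_{-i}) + (1 - \pi_i({\bf t}'))\overline{x}_i(t_i',{\bf t}'_{-i})$ holds for every ${\bf t}'$ and every $i \in N_{-s}$, completing the proof.

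There is essentially no technical obstacle here: the statement is a tautological rewriting that exploits the binary nature of the single-item allocation function to split $x_i$ into its winner-branch and loser-branch values. The proposition's value is conceptual rather than computational — it gives a clean notational handle for later arguments that need to reason separately about what winners and losers pay as a function of the diffusion strategy $r_i'$ — so the proof should be stated crisply and briefly, without dwelling on the two algebraic verifications.
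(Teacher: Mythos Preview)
Your proof is correct and matches the paper's approach: the paper does not give an explicit proof but simply remarks that the proposition ``is straightforward from this definition,'' and your two-case verification on the binary value of $\pi_i({\bf t}')$ is exactly the tautology being invoked. Your closing observation that the proposition's role is notational rather than technical is also apt.
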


We call $\tilde{x}$ and $\overline{x}$ the \emph{decoupled-payments} from $x$. To simplify the notations, the term ${\bf t}_{-i},{\bf t}_{-i}'$ will be omitted when analyzing a specific buyer $i$. In the following definitions and lemmas, we will characterize several properties that any IC diffusion auction should possess, and then in Theorem \ref{maint} we will further prove that these properties are also sufficient for any diffusion auction to be IC. We first extend the traditional value-monotonicity into diffusion auctions.
\begin{defn}[Value-Monotonic Allocation]\label{def-value-mono}
	An allocation policy $\pi$ for a diffusion auction is value-monotonic if for any ${\bf t}'$ and any buyer $i$ with $\pi_i((v_i',r_i'),{\bf t}_{-i}')=1$, we have $\pi_i((v_i'',r_i'),{\bf t}_{-i}')=1$ when $v_i''\geq v_i'$.
\end{defn}
It means that for any buyer $i$, once her diffusion action is fixed as $r_i'$ and all the other buyers' reports are fixed as ${\bf t}_{-i}'$, then her allocation is monotonically increasing over her bid $v_i'$. In other words, given ${\bf t}_{-i}'$ and $r_i'$, $i$ wins the item if and only if $v_i'\geq v_i^*(r_i')$.
We now prove that any IC diffusion auction should have value-monotonic allocation.
\begin{lemma}\label{p1}
	If a diffusion auction $(\pi,x)$ is incentive-compatible, then $\pi$ is value-monotonic.
\end{lemma}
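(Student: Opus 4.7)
The plan is to proceed by contradiction and exploit the Payment Decoupling proposition. Suppose $\pi$ is not value-monotonic. Then there exist a buyer $i$, a diffusion report $r_i'$, bids $v_i'' > v_i'$, and a feasible profile ${\bf t}_{-i}'$ of others' reports such that $\pi_i((v_i',r_i'),{\bf t}_{-i}')=1$ but $\pi_i((v_i'',r_i'),{\bf t}_{-i}')=0$. Observe that since the reported diffusion set $r_i'$ is the same in both of these hypothetical reports, the set of informed buyers downstream of $i$ is unchanged, so the same ${\bf t}_{-i}'$ remains feasible for the other buyers in both cases; this is the only delicate point in the argument, and it is what allows us to apply the IC condition with a common ${\bf t}_{-i}'$ on both sides.

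Next I would instantiate IC in two scenarios. In Scenario A, buyer $i$'s true type is $t_i = (v_i'', r_i')$ (take $r_i = r_i'$ so truthful diffusion means diffusing to $r_i'$). Truthful reporting yields utility $-\overline{x}_i((v_i'',r_i'),{\bf t}_{-i}')$ since she loses, while deviating to $(v_i',r_i')$ yields $v_i'' - \tilde{x}_i((v_i',r_i'),{\bf t}_{-i}')$ since she now wins. IC gives
\begin{equation*}
-\overline{x}_i((v_i'',r_i'),{\bf t}_{-i}') \;\geq\; v_i'' - \tilde{x}_i((v_i',r_i'),{\bf t}_{-i}').
\end{equation*}
In Scenario B, buyer $i$'s true type is $(v_i', r_i')$; truthful reporting gives $v_i' - \tilde{x}_i((v_i',r_i'),{\bf t}_{-i}')$ and the upward deviation to $(v_i'',r_i')$ gives $-\overline{x}_i((v_i'',r_i'),{\bf t}_{-i}')$, so IC gives
\begin{equation*}
v_i' - \tilde{x}_i((v_i',r_i'),{\bf t}_{-i}') \;\geq\; -\overline{x}_i((v_i'',r_i'),{\bf t}_{-i}').
\end{equation*}

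Adding the two inequalities (or equivalently bounding $\tilde{x}_i((v_i',r_i'),{\bf t}_{-i}') - \overline{x}_i((v_i'',r_i'),{\bf t}_{-i}')$ from both sides) yields $v_i'' \leq \tilde{x}_i - \overline{x}_i \leq v_i'$, contradicting $v_i'' > v_i'$. Hence no such $(v_i',v_i'',r_i',{\bf t}_{-i}')$ exists and $\pi$ must be value-monotonic. The main subtlety, as noted, is justifying that the same ${\bf t}_{-i}'$ may be used in both scenarios; once one sees that identical diffusion reports induce identical feasible downstream profiles, the rest is the standard single-parameter Myerson-style bracketing argument, now performed for each fixed $r_i'$ separately.
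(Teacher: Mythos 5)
Your proof is correct and follows essentially the same route as the paper's: a contradiction argument that fixes the diffusion set, writes down the two IC constraints for the low-bid and high-bid types, and derives $v_i'' \leq \tilde{x}_i - \overline{x}_i \leq v_i'$ (the paper phrases this as a case split on which of the two inequalities fails, which is logically the same bracketing). Your explicit remark that identical diffusion reports keep the same ${\bf t}_{-i}'$ feasible is a point the paper leaves implicit, and it is worth stating.
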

\begin{proof}
Proof by contradiction: assume there exists an IC diffusion auction and $\pi$ is not value-monotonic. This means there are two values $v_i^2>v_i^1$ such that $\pi_i((v_i^1,r_i),{\bf t}_{-i})=1$ but $\pi_i((v_i^2,r_i),{\bf t}_{-i})=0$. Denote the payments for buyer $i$ with types $t_i^1=(v_i^1,r_i)$ and $t_i^2=(v_i^2,r_i)$ by $\tilde{x}_i(t_i^1)$ and $\overline{x}_i(t_i^2)$, respectively. Recall that $\tilde{x}_i$ is $i$'s payment for winning and $\overline{x}_i$ is her payment for losing. If $v_i^1-\tilde{x}_i(t_i^1)\geq -\overline{x}_i(t_i^2)$, then buyer $i$ with losing type $t_i^2$ should misreport $t_i^1$ since $v_i^2-\tilde{x}_i(t_i^1)> v_i^1-\tilde{x}_i(t_i^1)\geq -\overline{x}_i(t_i^2)$; similarly, if $v_i^1-\tilde{x}_i(t_i^1)< -\overline{x}_i(t_i^2)$, then buyer $i$ with winning type $t_i^1$ should misreport $t_i^2$ to lose. Since there always exists a beneficial deviation, then the diffusion auction cannot be IC and the assumption is false.
\end{proof}

We next extend the traditional value-independent payment into diffusion auctions based on the payment-decoupling in Proposition \ref{Payment-Decoupling}.

\begin{defn}[Bid-Independent Decoupled-Payments]
The two decoupled-payments from $x$ are bid-independent if for all ${\bf t}'_{-i}$, all $i$ and $i$'s two different reports $(v_i',r_i')$ and $(v_i'',r_i')$, we have $\tilde{x}_i(v_i',r_i')=\tilde{x}_i(v_i'',r_i')$ and $\overline{x}_i(v_i',r_i')=\overline{x}_i(v_i'',r_i')$.
\end{defn}
The payment for $i$ consists of the two decoupled-payments, thus if both of them are bid-independent, then the payment for $i$ is also bid-independent. This covers the property of payment in traditional single-item auctions. We now prove that any incentive-compatible diffusion auction should have bid-independent decoupled-payments.


\begin{lemma}\label{p2}
	If a diffusion auction $(\pi,x)$ is incentive-compatible, then the decoupled-payments of $x$ are bid-independent.
\end{lemma}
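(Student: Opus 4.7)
The plan is to argue by contradiction in the spirit of Lemma~\ref{p1}, producing a profitable value deviation whenever either decoupled payment depends on the bid at a fixed diffusion $r_i'$. Fix ${\bf t}'_{-i}$ and $r_i'$ and suppose, toward a contradiction, that either $\tilde{x}_i(v_i^1,r_i')\neq\tilde{x}_i(v_i^2,r_i')$ for two winning reports $(v_i^j,r_i')$ or $\overline{x}_i(v_i^1,r_i')\neq\overline{x}_i(v_i^2,r_i')$ for two losing reports. By Lemma~\ref{p1}, for fixed $r_i'$ and ${\bf t}'_{-i}$ the winning region in $v_i'$ is the upward-closed ray $[v_i^*(r_i'),\infty)$, so restricting attention to a pair that is jointly winning (for the $\tilde{x}_i$ claim) or jointly losing (for the $\overline{x}_i$ claim) costs nothing, because these are the only regions on which the respective decoupled payments are actually realized by the mechanism.

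For the $\tilde{x}_i$ case, assume WLOG $\tilde{x}_i(v_i^1,r_i')<\tilde{x}_i(v_i^2,r_i')$ and consider a buyer whose true type is $t_i=(v_i^2,r_i')$, so that her truthful report induces exactly the profile ${\bf t}'_{-i}$ among the others. The crucial observation is that deviating to $(v_i^1,r_i')$ has the identical diffusion set $r_i'$ and therefore produces the identical others' profile ${\bf t}'_{-i}$; by assumption she still wins, so her deviation utility $v_i^2-\tilde{x}_i(v_i^1,r_i')$ strictly exceeds her truthful utility $v_i^2-\tilde{x}_i(v_i^2,r_i')$, contradicting IC. The $\overline{x}_i$ case is fully symmetric: with $\overline{x}_i(v_i^1,r_i')>\overline{x}_i(v_i^2,r_i')$, a buyer of true type $(v_i^1,r_i')$ strictly gains by deviating to $(v_i^2,r_i')$, which keeps the diffusion and the losing allocation intact but swaps her payment to the smaller (possibly more-negative) $\overline{x}_i(v_i^2,r_i')$.

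The one subtlety I expect to be the main obstacle is justifying that we may take the hypothetical true neighbor set to equal exactly $r_i'$. This is legitimate because the mechanism is defined on the full type space $T_i=\mathcal{R}_{\geq 0}\times\mathcal{P}(N)$ and IC is quantified over every true $t_i$, so realizations with $r_i=r_i'$ are admissible for every $r_i'$ appearing in the statement of bid-independence. Once this is granted, the whole argument collapses to a one-dimensional single-parameter deviation in the style of Lemma~\ref{p1}: fixing $r_i'$ across the two candidate reports is exactly what freezes the induced profile ${\bf t}'_{-i}$, so any payment discrepancy can only be attributed to the value component of the report, and IC therefore rules it out.
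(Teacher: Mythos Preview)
Your argument is correct and follows essentially the same route as the paper: fix the diffusion set, consider two bids in the same (winning or losing) region, and exhibit a profitable value-only deviation from whichever bid carries the higher payment. Your version is slightly more explicit than the paper's in two places---invoking Lemma~\ref{p1} to justify that the winning region is an upward ray, and noting that one may take the true neighbor set equal to $r_i'$ so the deviation leaves ${\bf t}'_{-i}$ unchanged---but the underlying idea and structure are the same.
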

\begin{proof}
Given a buyer $i$ and ${\bf t}_{-i}$, for any two winning types $t_i^1=(v_i^1,r_i)$ and $t_i^2=(v_i^2,r_i)$ where $\min\{v_i^1,v_i^2\}\geq v^*_i(r_i)$, assume for a contradiction that $\tilde{x}_i(v_i^1,r_i)\neq \tilde{x}_i(v_i^2,r_i)$, then buyer $i$ with $t_i^1$ would misreport $t_i^2$ when $\tilde{x}_i(v_i^1,r_i)> \tilde{x}_i(v_i^2,r_i)$, and buyer $i$ with $t_i^2$ should misreport $t_i^1$ when $\tilde{x}_i(v_i^1,r_i)<\tilde{x}_i(v_i^2,r_i)$. Therefore, for the winning buyer, her payment $\tilde{x}_i$ should be independent of her bid. The same argument is also true for the losers. If $\max\{v_i^1,v_i^2\}< v^*_i(r_i)$ and $\overline{x}_i(v_i^1,r_i)\neq \overline{x}_i(v_i^2,r_i)$, then loser $i$ with higher payment may misreport. Hence, $\tilde{x}_i$ and $\overline{x}_i$ should be independent of buyer's bid in any IC diffusion auction.
\end{proof}

Since $\tilde{x}_i$ and $\overline{x}_i$ are independent of $v_i'$ in any IC diffusion auction, we replace $\tilde{x}_i(v_i',r_i)$ and $\overline{x}_i(v_i',r_i)$ with $\tilde{x}_i(r_i)$ and $\overline{x}_i(r_i)$ for expression convenience.
The following lemma further deduces that if a diffusion auction is IC, then for every buyer, her critical payment should be a binding constraint for her decoupled-payments.
\begin{lemma}\label{p3}
	If a diffusion auction $(\pi,x)$ is incentive-compatible, then for any ${\bf t}$ and all $i$, the equation  $\tilde{x}_i(r_i)-\overline{x}_i(r_i)=v^*_i(r_i)$ holds.
\end{lemma}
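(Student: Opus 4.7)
The plan is to argue by contradiction, exploiting Lemmas~\ref{p1} and \ref{p2} to reduce the payment structure to two numbers $\tilde{x}_i(r_i)$ and $\overline{x}_i(r_i)$ depending only on the diffusion choice, and then showing that if their difference deviates from the critical bid $v^*_i(r_i)$, a buyer whose true value lies in the resulting gap can strictly gain by misreporting only her valuation (keeping diffusion fixed at the truthful $r_i$). Throughout, I would fix ${\bf t}_{-i}$ and the buyer's diffusion report at the true set $r_i$, so the only variable is the bid; this keeps me inside the single-parameter slice and sidesteps any interaction with the other agents' responses to diffusion.

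The two cases I would run are as follows. Suppose first that $\tilde{x}_i(r_i)-\overline{x}_i(r_i) > v^*_i(r_i)$. Pick any true value $v_i \in (v^*_i(r_i),\,\tilde{x}_i(r_i)-\overline{x}_i(r_i))$; by Definition~\ref{def-value-mono} and value-monotonicity (Lemma~\ref{p1}), truthful reporting wins and gives utility $v_i-\tilde{x}_i(r_i) < -\overline{x}_i(r_i)$, which is the utility of misreporting any bid strictly below $v^*_i(r_i)$ (using that the losing payment is bid-independent by Lemma~\ref{p2}). This contradicts IC. Symmetrically, if $\tilde{x}_i(r_i)-\overline{x}_i(r_i) < v^*_i(r_i)$, pick any $v_i \in (\tilde{x}_i(r_i)-\overline{x}_i(r_i),\,v^*_i(r_i))$; truthful reporting loses and yields $-\overline{x}_i(r_i)$, while misreporting any bid at least $v^*_i(r_i)$ wins and yields $v_i-\tilde{x}_i(r_i) > -\overline{x}_i(r_i)$, again contradicting IC. The inequality on both sides forces equality.

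The only genuinely delicate point, and what I would flag as the main obstacle, is the boundary: the two deviations I use exist only when the open interval involved is nonempty, so strictly I get the inequalities $v^*_i(r_i) \ge \tilde{x}_i(r_i)-\overline{x}_i(r_i)$ and $v^*_i(r_i) \le \tilde{x}_i(r_i)-\overline{x}_i(r_i)$ by taking $v_i$ arbitrarily close to $v^*_i(r_i)$ from above and below respectively (using that the inequalities must hold for all such $v_i$, hence in the limit). This avoids any issue with how ties at $v_i=v^*_i(r_i)$ are broken by the $\arg\min$ in the definition of the critical bid. I would also briefly remark that the degenerate case $v^*_i(r_i)=\infty$ (unlucky buyers) is vacuous here: the symbol $\tilde{x}_i(r_i)$ is not pinned down because the winning event never occurs, and the equation is to be read only over reports for which the buyer can actually be allocated. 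Combining the two inequalities yields the claimed identity $\tilde{x}_i(r_i)-\overline{x}_i(r_i)=v^*_i(r_i)$ for every $i$ and every ${\bf t}_{-i}$.
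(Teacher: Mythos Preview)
Your argument is correct and follows essentially the same route as the paper's own proof: fix $r_i$, assume a strict inequality in either direction, and exhibit a true value in the resulting gap for which misreporting only the bid (to flip the allocation) is profitable. Your treatment is in fact slightly more careful than the paper's, which in the first case sits exactly at $v_i=v^*_i(r_i)$ without discussing tie-breaking and does not comment on the $v^*_i(r_i)=\infty$ case.
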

\begin{proof}
	Assume there is an IC diffusion auction and $\tilde{x}_i(r_i)-\overline{x}_i(r_i)\neq v^*_i(r_i)$. Then either $\tilde{x}_i(r_i)-\overline{x}_i(r_i)> v^*_i(r_i)$ or $\tilde{x}_i(r_i)-\overline{x}_i(r_i)< v^*_i(r_i)$. For the former case, the winning buyer with $t_i=(v^*_i(r_i),r_i)$ will try to lose by lowering her bid, since her utility will be increased from $v^*_i(r_i)-\tilde{x}_i(r_i)$ to $-\overline{x}_i(r_i)$. Similarly, for the latter case, a losing buyer $i$ with $t_i=(v_i,r_i)$, where $\tilde{x}_i(r_i)-\overline{x}_i(r_i)<v_i<v^*_i(r_i)$, will misreport to win since her utility of winning $v_i-\tilde{x}_i(r_i)$ is larger than that of losing $-\overline{x}_i(r_i)$. In either case, it contradicts the assumption.
\end{proof}
The last property we want to present is that in an IC diffusion auction, a buyer's payment should be minimized by diffusing the sale information to all her neighbors. 
\begin{defn}[Diffusion-Monotonic Decoupled-Payments]
The two decoupled-payments from $x$ are diffusion-monotonic if for all $i$ and all ${\bf t}'_{-i}$, when $r_i''\subseteq r_i'$, we have $\tilde{x}_i(v_i', r_i'')\geq \tilde{x}_i(v_i', r_i')$ and $\overline{x}_i(v_i', r_i'')\geq \overline{x}_i(v_i', r_i')$.
\end{defn}
The following lemma guarantees that all IC diffusion auctions possess this diffusion-monotonicity property.
\begin{lemma}\label{p4}
	If a diffusion auction $(\pi,x)$ is incentive-compatible, then the decoupled-payments from $x$ are diffusion-monotonic.
\end{lemma}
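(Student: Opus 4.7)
The plan is to argue by contradiction, exactly in the style of Lemmas~\ref{p1}--\ref{p3}. Suppose diffusion-monotonicity fails, so there is a buyer $i$, a surrounding context, and diffusion sets $r_i''\subseteq r_i'\subseteq r_i$ witnessing either $\tilde{x}_i(v_i',r_i'')<\tilde{x}_i(v_i',r_i')$ or $\overline{x}_i(v_i',r_i'')<\overline{x}_i(v_i',r_i')$. For each case I would construct a true type for $i$ whose fully-truthful report is $(v_i,r_i')$ but whose deviation $(v_i,r_i'')$ yields strictly higher utility, contradicting IC. Lemma~\ref{p2} is the key enabling step: because the decoupled payments are bid-independent, I am free to choose $v_i$ to force whichever outcome---winning or losing---the case at hand requires.

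For the winning-payment case, the fact that both $\tilde{x}_i(r_i')$ and $\tilde{x}_i(r_i'')$ are defined means both critical bids $v_i^*(r_i')$ and $v_i^*(r_i'')$ are finite. Choosing $v_i\geq \max\{v_i^*(r_i'),\, v_i^*(r_i'')\}$ and invoking value-monotonicity (Lemma~\ref{p1}) guarantees that buyer $i$ wins both under the truthful diffusion to $r_i'$ and under the deviation to $r_i''$; her utilities are then $v_i-\tilde{x}_i(r_i')$ versus $v_i-\tilde{x}_i(r_i'')$, and the contradiction hypothesis makes the latter strictly larger, violating IC. The losing-payment case is entirely symmetric: pick $v_i<\min\{v_i^*(r_i'),\, v_i^*(r_i'')\}$ so that $i$ loses under both reports, and note that her deviation utility $-\overline{x}_i(r_i'')$ strictly exceeds her truthful utility $-\overline{x}_i(r_i')$.

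The only delicate bookkeeping, which I do not expect to be a serious obstacle but which should be stated explicitly, is that when $i$ shrinks her diffusion from $r_i'$ to $r_i''$ the set of other buyers actually reached also shrinks, so the feasible profile of others moves from some ${\bf t}_{-i}'$ to some ${\bf t}_{-i}''$; the IC inequality I am invoking is exactly the one in the paper's IC definition, which already accommodates this change by allowing the two profiles to differ, so the two decoupled payments I compare are evaluated at the matching other-buyer contexts. The degenerate endpoints---an unlucky buyer with $v_i^*=\infty$, or the trivial $v_i^*=0$---simply leave the relevant decoupled payment undefined, so the diffusion-monotonicity inequality is vacuous in those edge cases and the argument above covers everything nontrivial.
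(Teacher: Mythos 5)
Your proposal is correct and follows essentially the same route as the paper's proof: for $r_i''\subseteq r_i'$, take a buyer whose true type is $(v_i,r_i')$ with $v_i\geq\max\{v_i^*(r_i'),v_i^*(r_i'')\}$ (resp.\ $v_i$ below both critical bids) so that she wins (resp.\ loses) under both reports, and observe that a strictly smaller $\tilde{x}_i$ (resp.\ $\overline{x}_i$) at $r_i''$ gives a profitable deviation, contradicting IC. Your explicit handling of the shift from ${\bf t}_{-i}'$ to ${\bf t}_{-i}''$ and of the $v_i^*=\infty$ edge case is only a more careful statement of what the paper leaves implicit.
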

\begin{proof}
Consider an IC diffusion auction and two diffusion sets $r_i^2\subseteq r_i^1$. If $\tilde{x}_i(r_i^2)<\tilde{x}_i(r_i^1)$, then a winning buyer with $(v_i,r_i^1)$, where $v_i\geq \max\{v^*_i(r_i^1),v^*_i(r_i^2)\}$, would misreport $(v_i,r_i^2)$ since $v_i-v^*_i(r_i^2) > v_i-v^*_i(r_i^1)$. This contradicts IC, thus $\tilde{x}_i(r_i^2)<\tilde{x}_i(r_i^1)$ is false. Similar argument shows $\overline{x}_i(r_i^2)<\overline{x}_i(r_i^1)$ is also false.
\end{proof}
Next, we prove that the above properties together also serve as a sufficient condition for IC diffusion auctions.

\begin{theorem}\label{maint}
	A diffusion auction $(\pi,x)$ is incentive-compatible if and only if for all type profile ${\bf t}$ and all $i$, P1-P4 are satisfied, where
	\begin{itemize}\itemindent=2em
		\item [P1]: $\pi$ is value-monotonic,
		\item [P2]: $\tilde{x}_i$ and $\overline{x}_i$ are bid-independent,
		\item [P3]: $\tilde{x}_i(r_i)-\overline{x}_i(r_i)=v^*_i(r_i)$,
		\item [P4]: $\tilde{x}_i$ and $\overline{x}_i$ are diffusion-monotonic.
	\end{itemize}
\end{theorem}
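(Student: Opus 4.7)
The necessity direction is essentially already in hand: P1 through P4 are exactly the conclusions of Lemmas~\ref{p1}--\ref{p4}, so if $(\pi,x)$ is incentive-compatible then all four properties hold. My plan, therefore, is to devote the argument to the sufficiency direction: assuming P1--P4, I will show that for every buyer $i$ with true type $(v_i,r_i)$, truthful reporting with full diffusion weakly dominates any alternative report $(v_i',r_i')$ with $r_i'\subseteq r_i$.

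The approach is a clean four-case analysis driven by the Payment-Decoupling Proposition. By Proposition~\ref{Payment-Decoupling} together with P2, buyer $i$'s payment depends only on her reported diffusion set and on whether she wins, so her utility is $v_i-\tilde{x}_i(r_i)$ or $-\overline{x}_i(r_i)$ under truth and $v_i-\tilde{x}_i(r_i')$ or $-\overline{x}_i(r_i')$ under deviation. By P1 and the definition of the critical bid, the truth-outcome is determined by the sign of $v_i-v^*_i(r_i)$ and the deviation-outcome by the sign of $v_i'-v^*_i(r_i')$. I split into the four cases (win/win, win/lose, lose/win, lose/lose) and verify $u_i(\text{truth})\geq u_i(\text{deviation})$ in each.

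The same-outcome cases are immediate from the diffusion-monotonicity P4: win/win reduces to $\tilde{x}_i(r_i)\leq \tilde{x}_i(r_i')$ and lose/lose to $\overline{x}_i(r_i)\leq \overline{x}_i(r_i')$. For win/lose, the required inequality $v_i-\tilde{x}_i(r_i)\geq -\overline{x}_i(r_i')$ rearranges to $v_i\geq \tilde{x}_i(r_i)-\overline{x}_i(r_i')$; using P4 to replace $\overline{x}_i(r_i')$ by the smaller $\overline{x}_i(r_i)$ and then P3 to collapse $\tilde{x}_i(r_i)-\overline{x}_i(r_i)=v^*_i(r_i)$, the right-hand side becomes at most $v^*_i(r_i)\leq v_i$, which holds by the case assumption. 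The symmetric lose/win case uses the \emph{other} half of P4, namely $\tilde{x}_i(r_i')\geq \tilde{x}_i(r_i)$, combined with P3 to conclude $\tilde{x}_i(r_i')-\overline{x}_i(r_i)\geq v^*_i(r_i)>v_i$, which is exactly the inequality needed.

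The only subtlety I would make explicit, since it is not really an obstacle but can confuse a careful reader, is that changing $i$'s reported neighbor set from $r_i$ to $r_i'$ also changes which other buyers are informed, so the ${\bf t}'_{-i}$ implicit on the two sides of each comparison is not literally the same profile. This matches the convention used throughout Section~3 (and in the proofs of Lemmas~\ref{p1}--\ref{p4}) of suppressing ${\bf t}_{-i}$ and reading $v^*_i(r_i')$, $\tilde{x}_i(r_i')$, $\overline{x}_i(r_i')$ as the values induced when $i$'s diffusion choice is $r_i'$ and the remaining buyers report naturally in response. Once this is noted, the four-case table above assembles into the desired IC inequality and the theorem follows.
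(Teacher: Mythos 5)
Your proposal is correct and takes essentially the same route as the paper: necessity is cited from Lemmas~\ref{p1}--\ref{p4}, and your four-case analysis (win/win, win/lose, lose/win, lose/lose) reproduces exactly the paper's two-step sufficiency argument, with the same inequality chains combining the case assumption with P3 and P4 (and P2 to drop the bid from the payments). Your explicit remark that the suppressed ${\bf t}'_{-i}$ changes with $i$'s diffusion choice is just a clarification of the notational convention the paper already uses, not a deviation in the argument.
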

\begin{proof}
	Lemma \ref{p1}-\ref{p4} proved the necessity of these four properties in any IC diffusion auction. Next we prove that if a mechanism satisfies P1-P4, then it is IC. The proof includes two steps. The first step shows that the winner has no incentives to deliberately become a loser and vise versa. The second step shows once a buyer's allocation is determined, truthfully reporting maximizes her utility .
	
According to P2, the bids in all $\tilde{x}_i$ and $\overline{x}_i$ are omitted. If $i$ is the winner by truthfully reporting, when she misreports to lose, her utility changes from $v_i-\tilde{x}_i(r_i)$ to $-\overline{x}_i(r_i')$, where $r_i'\subseteq r_i$. We have $v_i-\tilde{x}_i(r_i)\geq v^*_i(r_i)-\tilde{x}_i(r_i)=-\overline{x}_i(r_i)\geq -\overline{x}_i(r_i')$, where the left inequality comes from that $i$ wins by truthfully reporting, the inner equation comes from P3 and the right inequality comes from P4. Therefore, a winner has no incentive to misreport to become a loser. If $i$ is a loser by truthfully reporting, when she misreports to win, her utility changes from $-\overline{x}_i(r_i)$ to $v_i-\tilde{x}_i(r_i')$. We have $v_i-\tilde{x}_i(r_i')\leq v^*_i(r_i)-\tilde{x}_i(r_i')\leq v^*_i(r_i)-\tilde{x}_i(r_i)= -\overline{x}_i(r_i)$ where the first inequality comes from P1, the second inequality comes from P4 and the equation is from P3. Therefore a loser has no incentive to misreport to win.

Now we prove that each kind of buyers maximize their utilities by truthfully reporting. A winner's utility is $v_i-\tilde{x}_i(r_i)$. If she misreports $t_i'$ and still wins, her utility becomes $v_i-\tilde{x}_i(r_i')$ which is no more than $v_i-\tilde{x}_i(r_i)$ according to P2 and P4. Also, reporting $t_i'$ has a risk of becoming a loser, which is not beneficial. Therefore it is always an optimal strategy for the winner to be truthful. Similar arguments also hold for the losers. In a word, either a buyer wins or loses, being truthful is always a dominant strategy. That is, P1-P4 is a sufficient condition for a mechanism to be IC.
\end{proof}
Besides IC, as we mentioned earlier in Definition 3, a diffusion auction should also be individually rational (IR) which guarantees that each buyer's utility is nonnegative when bidding truthfully.
\begin{theorem}\label{ir}
	An incentive-compatible diffusion auction is individually rational if and only if for all buyer $i$ and all type profile ${\bf t}$, property P5: $\overline{x}_i(\emptyset)\leq 0$ is satisfied.
\end{theorem}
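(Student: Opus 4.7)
The plan is to prove both directions using the characterization P1--P4 that is already available for any IC diffusion auction.

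For necessity, I will assume $(\pi,x)$ is IR and fix any buyer $i$ and any ${\bf t}_{-i}'$. The natural probe is to let $i$ report the empty diffusion set $r_i'=\emptyset$ together with some bid. If there exists any $v_i$ with $(v_i,\emptyset)$ a losing report, then IR directly yields $-\overline{x}_i(\emptyset)\geq 0$, i.e.\ P5. The only subtle case is when $i$ wins for every nonnegative bid under $r_i'=\emptyset$; then $v_i^*(\emptyset)=0$, and taking $v_i=0$ IR gives $0-\tilde{x}_i(\emptyset)\geq 0$, while P3 gives $\tilde{x}_i(\emptyset)-\overline{x}_i(\emptyset)=v_i^*(\emptyset)=0$, so again $\overline{x}_i(\emptyset)=\tilde{x}_i(\emptyset)\leq 0$.

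For sufficiency, I assume IC (so P1--P4 hold by Theorem~\ref{maint}) together with P5, and I must check $u_i(t_i,((v_i,r_i'),{\bf t}_{-i}'),(\pi,x))\geq 0$ for every $i$, every $r_i'\subseteq r_i$, and every ${\bf t}_{-i}'$. The key observation is that P4 applied to $\emptyset\subseteq r_i'$ gives $\overline{x}_i(r_i')\leq \overline{x}_i(\emptyset)\leq 0$. I then split into two cases. If $i$ loses under the report $(v_i,r_i')$, her utility is $-\overline{x}_i(r_i')\geq 0$. If $i$ wins, then by P1 and the definition of the critical bid $v_i\geq v_i^*(r_i')$, and combining with P3,
\[
v_i-\tilde{x}_i(r_i')\;\geq\; v_i^*(r_i')-\tilde{x}_i(r_i')\;=\;-\overline{x}_i(r_i')\;\geq\; 0.
\]
Either way the utility is nonnegative, which is exactly IR.

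The only nontrivial step is the edge case in the necessity direction where $r_i'=\emptyset$ never produces a losing outcome; this is handled cleanly by P3 as indicated above. Everything else is a direct application of the properties P1, P3, and P4 already established for IC mechanisms.
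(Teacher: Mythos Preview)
Your proof is correct and follows essentially the same route as the paper: both directions rely on P3 to relate $\tilde{x}_i$ and $\overline{x}_i$ and on P4 (with $\emptyset\subseteq r_i'$) to reduce to $\overline{x}_i(\emptyset)$. The paper compresses the sufficiency argument by rewriting the utility as $\pi_i(v_i,r_i')(v_i-v_i^*(r_i'))-\overline{x}_i(r_i')$ and minimizing, whereas you do an explicit win/lose case split; you are also more careful than the paper about the boundary case $v_i^*(\emptyset)=0$, which the paper's minimization argument glosses over.
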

\begin{proof}
	Fix ${\bf t}_{-i}$, then $i$'s utility is $u_i(v_i,r_i)=\pi_i(v_i,r_i)(v_i-\tilde{x}_i(r_i))+(1-\pi_i(v_i,r_i))(-\overline{x}_i(r_i))=\pi_i(v_i,r_i)(v_i-v^*(r_i))-\overline{x}_i(r_i)$, where the last equation comes from P3. IR means that for any true type $(v_i',r_i')\in \mathcal{R}_{\geq 0}\times \mathcal{P}({r_i})$, the inequality $\pi_i(v_i',r_i')(v_i'-v^*(r_i'))-\overline{x}_i(r_i')\geq 0$ must hold. This can be guaranteed when $\min_{v'_i\in \mathcal{R}_{\geq 0},r_i'\subseteq r_i}\pi_i(v'_i,r'_i)(v'_i-v^*(r'_i))-\overline{x}_i(r_i')\geq 0$ holds. By P1, $\pi_i(v'_i,r'_i)=1$ when $v'_i\geq v^*(r'_i)$ and $\pi_i(v'_i,r'_i)=0$ when $v'_i< v^*(r'_i)$, therefore the minimization requires the bid $v'_i< v^*(r'_i)$. Then the minimization function becomes $-\overline{x}_i(r_i')$. According to P4, $-\overline{x}_i(r_i')$ is minimized by choosing $r_i'=\emptyset$. Hence, the minimum of $u_i$ is $-\overline{x}_i(\emptyset)$, where $t_i=(v_i'< v^*_i(\emptyset),r'_i=\emptyset)$. Therefore, IR is equivalent to the property that for each $i$ and each ${\bf t}$, $\overline{x}_i(\emptyset)\leq 0$ must hold.
\end{proof}

We emphasize that Theorem \ref{mal} and \ref{ir} imply Myerson’s Lemma, but not vice versa. Given any classic value-monotonic allocation policy, it is not guaranteed that a mechanism will have diffusion incentives. This is why we must have properties P3-P5 hold.
The above results present the IC and IR diffusion auctions in a high-level view, while no details are given for the allocation and payment policies.
Since a diffusion auction can be very complex, even when the allocation is value-monotonic, it could be very difficult to obtain the corresponding payment policy.
In the next section, we present a class of more operational and natural allocation policies, for which an elegant form of the payment policies can be achieved.


\section{Monotonic Allocation and Optimal Payment}
Our design of natural allocation policies is inspired by the following intuition:
in the real world, if a person wins in a sale, she can still win by posting a higher bid; a person who diffuses the sale information to less friends has a better chance of winning the sale, since some high-bid competitors might be excluded from the sale.
We focus on such allocations.

Formally, given a buyer $i$, define a partial order $\succeq_{t_i}$ on types: $(v_i^1,r_i^1)\succeq (v_i^2,r_i^2)$ if and only if $v_i^1\geq v_i^2 \wedge r_i^1\subseteq r_i^2$.
\begin{defn}
	An allocation policy $\pi$ for a diffusion auction is monotonic if for every type $t_i$ with $\pi_i(t_i,{\bf t}_{-i})=1$, we have that $\pi_i(t_i',{\bf t}'_{-i})=1$ for any $t_i'\succeq t_i$.
\end{defn}
This definition is an extension of value-monotonicity in Definition \ref{def-value-mono}.
The following lemma presents a nice property of monotonic allocation policies, it is a key component of the results in this section.

\begin{lemma}\label{cbm}
	If an allocation policy $\pi$ is monotonic, then for all $i$ and all $r'_i\subseteq r_i$, $v_i^*(r_i')\leq v_i^*(r_i)$.
\end{lemma}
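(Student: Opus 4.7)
The plan is to apply the monotonicity of $\pi$ directly to the type obtained from $(v_i^*(r_i), r_i)$ by shrinking the reported neighbor set from $r_i$ to $r_i'$. First I would dispense with the trivial case: if $v_i^*(r_i)=\infty$, then the claimed inequality holds vacuously, so I may assume $v_i^*(r_i)$ is finite. By the $\arg\min$ definition of the critical bid, this means that for the fixed profile ${\bf t}_{-i}$, the report $t_i:=(v_i^*(r_i), r_i)$ satisfies $\pi_i(t_i, {\bf t}_{-i})=1$.

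Next I would compare $t_i$ with the candidate report $t_i':=(v_i^*(r_i), r_i')$ in the partial order $\succeq$. Since the bid is identical and $r_i' \subseteq r_i$, the definition $(v_i^1,r_i^1)\succeq (v_i^2,r_i^2)\Leftrightarrow v_i^1\geq v_i^2 \wedge r_i^1\subseteq r_i^2$ immediately gives $t_i' \succeq t_i$. Monotonicity of $\pi$ then yields $\pi_i(t_i', {\bf t}'_{-i})=1$ for the corresponding feasible profile ${\bf t}'_{-i}$ induced by $i$'s narrower diffusion. Since the critical bid $v_i^*(r_i')$ is the minimum bid under which buyer $i$ wins while reporting neighbors $r_i'$, and we have just exhibited a winning bid equal to $v_i^*(r_i)$, it follows that $v_i^*(r_i')\leq v_i^*(r_i)$.

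The only subtlety, which I would flag explicitly, is that replacing $r_i$ by $r_i'$ can alter the reports of other buyers (some neighbors cut off by the narrower diffusion may disappear from the sale), so the appropriate profile on the right-hand side is ${\bf t}'_{-i}$ rather than ${\bf t}_{-i}$. The monotonicity definition is already phrased in terms of an arbitrary such ${\bf t}'_{-i}$, so it applies without modification; this is the one place where the definition's separation of ${\bf t}_{-i}$ from ${\bf t}'_{-i}$ is doing real work. No further calculation is needed.
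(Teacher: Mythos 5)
Your proof is correct and takes essentially the same approach as the paper: both compare the reports $(v_i^*(r_i),r_i')$ and $(v_i^*(r_i),r_i)$ under the partial order and apply monotonicity, the only difference being that the paper phrases the argument as a contradiction while you argue directly (and you additionally handle the $v_i^*(r_i)=\infty$ case and the ${\bf t}_{-i}$ versus ${\bf t}'_{-i}$ subtlety, which the paper leaves implicit).
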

\begin{proof}
	Assume there is a monotonic $\pi$ such that for two reports $t'_i=(v^*_i(r_i), r'_i)$ and $t_i=(v^*_i(r_i),r_i)$, where $r_i'\subseteq r_i$, it has $v^*_i(r'_i)>v^*_i(r_i)$. In this case, we have $t'_i\succ t_i$, however, $\pi_i(t'_i)=0$ and $\pi_i(t_i)=1$, which contradicts the monotonicity. Thus the assumption is false.
\end{proof}
%
The key observation from Lemma \ref{cbm} is that under any monotonic allocation policy, buyer $i$'s critical bid function $v_i^*(r_i)$ is nondecreasing in $r_i$. Recall that in characterizations of incentive-compatibility, P4 shows both $\tilde{x}_i$ and $\overline{x}_i$ are nonincreasing in $r_i$. Moreover, P3 shows $\tilde{x}_i(r_i)-\overline{x}_i(r_i)=v_i^*(r_i)$. Based on Lemma \ref{cbm} and the above observations, we show that any monotonic allocation policy is implementable.

\begin{prop}\label{mal}
	In diffusion auctions, any monotonic allocation policy is implementable.
\end{prop}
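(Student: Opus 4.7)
The plan is to prove implementability constructively: given a monotonic $\pi$, I will exhibit a payment policy $x$ and verify the five properties P1--P5 collected in Theorems \ref{maint} and \ref{ir}, which together certify that $(\pi,x)$ is IC and IR.

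The choice of $x$ is driven by a tension noted in Lemma \ref{cbm}: the critical bid $v_i^*(r_i')$ is \emph{nondecreasing} in $r_i'$, whereas P4 demands that both decoupled payments be \emph{nonincreasing} in $r_i'$. Hence the naive ``winner pays $v_i^*$, loser pays zero'' cannot satisfy P4, because $\tilde{x}_i=v_i^*$ moves the wrong way. The fix is to load all of the variation into $\overline{x}_i$, leaving $\tilde{x}_i$ flat. Concretely, for each buyer $i$ and each reported diffusion set $r_i'\subseteq r_i$, I propose to set
$$\tilde{x}_i(r_i') \;=\; v_i^*(\emptyset), \qquad \overline{x}_i(r_i') \;=\; v_i^*(\emptyset)-v_i^*(r_i'),$$
and assemble $x_i({\bf t}')$ from these via Proposition \ref{Payment-Decoupling}. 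Intuitively, the winner pays the ``solo'' critical price and each loser is reimbursed for the extra competition she invited in by diffusing further.

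The verifications then collapse to one-line checks. P1 is inherited directly from monotonicity of $\pi$ (fixing $r_i'$ and varying $v_i$). P2 holds by construction, since the formulas depend on $r_i'$ only. P3 is the identity $\tilde{x}_i(r_i')-\overline{x}_i(r_i')=v_i^*(r_i')$. For P4, $\tilde{x}_i$ is constant in $r_i'$, and whenever $r_i''\subseteq r_i'$ Lemma \ref{cbm} yields $\overline{x}_i(r_i'')-\overline{x}_i(r_i')=v_i^*(r_i')-v_i^*(r_i'')\geq 0$. Finally, P5 is immediate: $\overline{x}_i(\emptyset)=0$. Plugging P1--P5 into Theorems \ref{maint} and \ref{ir} gives IC and IR, establishing implementability.

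The main obstacle I anticipate is the degenerate case of ``unlucky'' buyers, where $v_i^*(r_i')=\infty$ on some diffusion sets. If $v_i^*(\emptyset)=\infty$ the buyer can never win, and I simply set $\overline{x}_i\equiv 0$ (with $\tilde{x}_i$ irrelevant), which satisfies P4 and P5 trivially. More subtly, $v_i^*(\emptyset)$ may be finite while $v_i^*(r_i')=\infty$ for some larger $r_i'$; Lemma \ref{cbm} forces the set of ``lucky'' subsets to be downward-closed, so one can saturate $\overline{x}_i(r_i')$ at $v_i^*(\emptyset)-\sup\{v_i^*(r_i''):r_i''\subseteq r_i',\,v_i^*(r_i'')<\infty\}$ without breaking diffusion-monotonicity. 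Past these boundary housekeeping cases the construction and verification proceed as above.
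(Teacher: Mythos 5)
Your construction is exactly the $\alpha=0$ instance of the paper's own proof (the paper constructs $\tilde{x}_i(r_i)=\tilde{f}_i(v_i^*(r_i))+h_i(\mathbf{t}_{-i})$, $\overline{x}_i(r_i)=\overline{f}_i(v_i^*(r_i))+h_i(\mathbf{t}_{-i})$ with $\tilde f_i-\overline f_i=v_i^*$ nonincreasing and $h_i=(1+\alpha)v_i^*(\emptyset)$, of which your $\tilde{x}_i=v_i^*(\emptyset)$, $\overline{x}_i=v_i^*(\emptyset)-v_i^*(r_i)$ is the special case), and your verification of P1--P5 via Lemma \ref{cbm} matches the paper's argument. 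One caveat: the ``saturation'' patch in your last paragraph does not actually work --- if $v_i^*(r_i)=\infty$ while $v_i^*(r_i')<\infty$ for some $r_i'\subset r_i$, a buyer with sufficiently large true value profitably under-diffuses and wins under any finite bid-independent payments, so that degenerate case is genuinely not implementable; the paper's proof silently ignores it as well (its convention sets $v_i^*=\infty$ only uniformly over all $r_i'$ for ``unlucky'' buyers), so this does not put you behind the paper, but the claimed fix should be dropped rather than asserted.
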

\begin{proof}
	Remind that a diffusion auction is IR and IC if and only if for any type profile ${\bf t}_{-i}$ and any buyer $i$ with $(v_i,r_i)$, P1-P5 hold.
	Given any monotonic allocation policy, P1 is directly met according to the definitions. Construct two decoupled-payments as $\tilde{x}_i(r_i)=\tilde{f}_i(v_i^*(r_i))+h_i({\bf t}_{-i})$ and $\overline{x}_i(r_i)=\overline{f}_i(v_i^*(r_i))+h_i({\bf t}_{-i})$, where $h_i({\bf t}_{-i})$ is independent of $i$'s report, and $\tilde{f}_i(v_i^*(r_i))$ and $\overline{f}_i(v_i^*(r_i))$ are nonincreasing in $v_i^*(r_i)$, and $\tilde{f}_i(v_i^*(r_i))-\overline{f}_i(v_i^*(r_i))=v_i^*(r_i)$. This construction for $\tilde{x}$ and $\overline{x}$ satisfies P2-P4. A possible choice is $\tilde{f}_i(v_i^*(r_i))=-\alpha v_i^*(r_i)$ and $\overline{f}_i(v_i^*(r_i))=-(1+\alpha)v_i^*(r_i)$, where $\alpha$ could be any nonnegative value. Setting $h_i({\bf t}_{-i})=(1+\alpha)v_i^*(\emptyset)$, P5 is also satisfied. Therefore, for any monotonic allocation, there is a payment such that the mechanism is IR and IC. This completes the proof.
\end{proof}

From the proof of Proposition \ref{mal}, we can see that even when given a monotonic allocation policy, the space of proper payment policies could still be very large. For example, $\tilde{f}_i$ and $\overline{f}_i$ need not be linear in $v_i^*(r_i)$, they can be any polynomials with negative coefficients.
In the following we select payment policies with respect to the seller's revenue.
\begin{defn}
	Given a monotonic allocation policy $\pi$ and two payment policies ${x^1},{x^2}$ satisfying IR and IC, we say ${x^1}$ dominates ${x^2}$ in the revenue if for any type profile ${\bf t}$, $Rev^{(\pi,{x^1})}({\bf t})\geq Rev^{(\pi,{x^2})}({\bf t})$.
\end{defn}
If there exists a payment policy ${x^*}$ such that for any type profile ${\bf t}\in T$, it dominates {\it all} the other payment policies under IR and IC conditions, then ${x^*}$ is optimal with respect to $\pi$. Our next result shows that for any monotonic allocation policy, the optimal payment policy exists.

\begin{theorem}\label{mam}
	Given any monotonic allocation policy $\pi$, the payment policy ${x^*}=\{x^*_i=v_i^*(\emptyset)-v^*_i(r_i)(1-\pi_i({\bf t}))\}_{i\in N_{-s}}$ is optimal.
\end{theorem}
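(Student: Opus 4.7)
The plan is to prove the theorem in two stages: first, verify that $(\pi, x^*)$ itself is IR and IC, so $x^*$ is a legal candidate payment policy; second, show that among all IR/IC payment policies, $x^*$ pointwise dominates every alternative on a buyer-by-buyer and profile-by-profile basis, from which revenue dominance follows by summation.

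For the first stage, I would read off the decoupled form of $x^*$ via Proposition \ref{Payment-Decoupling}: $\tilde{x}^*_i(r_i) = v_i^*(\emptyset)$ and $\overline{x}^*_i(r_i) = v_i^*(\emptyset) - v_i^*(r_i)$. These are manifestly independent of the reported bid, giving P2. Their difference is $v_i^*(r_i)$, giving P3. Since $\tilde{x}^*_i$ is constant in $r_i$ and, by Lemma \ref{cbm}, $v_i^*(\cdot)$ is nondecreasing under set inclusion, $\overline{x}^*_i$ is nonincreasing as $r_i$ grows, giving P4. Finally $\overline{x}^*_i(\emptyset) = 0$, giving P5, while P1 holds by the hypothesis that $\pi$ is monotonic. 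Theorems \ref{maint} and \ref{ir} then yield IC and IR.

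For the second stage, fix $\mathbf{t}$ and a buyer $i$, and let $(\tilde{x}_i, \overline{x}_i)$ be the decoupled form of any other IR/IC payment policy $x$. The key is a two-step upper bound on the winner payment: P4 applied to $\tilde{x}_i$ gives $\tilde{x}_i(r_i) \leq \tilde{x}_i(\emptyset)$, then P3 together with P5 give $\tilde{x}_i(\emptyset) = \overline{x}_i(\emptyset) + v_i^*(\emptyset) \leq v_i^*(\emptyset)$. Hence $\tilde{x}_i(r_i) \leq v_i^*(\emptyset) = \tilde{x}^*_i(r_i)$. Using P3 once more transports this cap to the loser payment: $\overline{x}_i(r_i) = \tilde{x}_i(r_i) - v_i^*(r_i) \leq v_i^*(\emptyset) - v_i^*(r_i) = \overline{x}^*_i(r_i)$. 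In either branch of $\pi_i(\mathbf{t}) \in \{0,1\}$, Proposition \ref{Payment-Decoupling} yields $x_i(\mathbf{t}) \leq x^*_i(\mathbf{t})$. Summing over $i\in N_{-s}$ gives $Rev^{(\pi,x)}(\mathbf{t}) \leq Rev^{(\pi, x^*)}(\mathbf{t})$ for every $\mathbf{t}$, which is exactly the dominance condition in the definition preceding the theorem.

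The only real obstacle is recognizing that the two apparently weak constraints, P4 on $\tilde{x}_i$ and P5 on $\overline{x}_i(\emptyset)$, already pin the winner's payment from above by $v_i^*(\emptyset)$: P4 collapses the dependence on $r_i$ down to its value at $\emptyset$, and then P5 combined with the critical-bid identity P3 closes the remaining gap. Once the winner payment is thus capped, P3 automatically transfers the cap to the loser payment, so no separate argument about subsets $r_i' \subsetneq r_i$ is needed and pointwise dominance drops out without case analysis on graph structure.
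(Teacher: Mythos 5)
Your proposal is correct and follows essentially the same route as the paper: both hinge on the chain $\tilde{x}_i(r_i)\leq \tilde{x}_i(\emptyset)=\overline{x}_i(\emptyset)+v_i^*(\emptyset)\leq v_i^*(\emptyset)$ (i.e., P4 on $\tilde{x}_i$ combined with P3 at $\emptyset$ and P5), transfer the cap to $\overline{x}_i(r_i)$ via P3, and then check that the cap-achieving payments satisfy P1--P5. The only cosmetic difference is that you phrase the argument as per-buyer pointwise payment dominance, whereas the paper packages the same inequalities as a per-buyer constrained maximization of $\overline{x}_i(r_i)$ after substituting P3 for the winner's term.
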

\begin{proof}
	Given any type report profile ${\bf t}$ and a monotonic allocation policy $\pi$, the seller's revenue can be expressed as
	$$Rev({\bf t})=\tilde{x}_w(r_w)+\sum_{i(\neq w)\in N_{-s}}\overline{x}_i(r_i), $$
	where $w$ is the winner and $i\neq w$ are the losers. The optimal payment policy $x^*$ should maximize $Rev({\bf t})$ and $(\pi, x^*)$ should be IR and IC.
For all IR and IC mechanisms, P1-P5 should hold. P1 is met by the monotonic allocation. According to P3, $\tilde{x}_w(r_w)=\overline{x}_w(r_w)+v^*_w(r_w)$. Substituting this equation into $Rev({\bf t})$, we get $v_w^*(r_w)+\sum_{i\in N_{-s}}\overline{x}_i(r_i)$. Since the allocation policy is already given, $v_w^*(r_w)$ is a constant for any ${\bf t}\in T$. Hence, the objective now is to find $x^*=\arg {\max _x}\sum_{i\in N_{-s}}\overline{x}_i(r_i)$ where P2-P5 hold for all $i$.
Furthermore, since each buyer's payment policy $(\tilde{x}_i,\overline{x}_i)$ is independent of the others' payment policies, the problem of searching for $argmax_{x=(\tilde{x}_i,\overline{x}_i)} \sum_{i\in N_{-s}}\overline{x}_i(r_i)$ becomes solving $|N|-1$ independent maximization subproblems:
	\begin{alignat}{2}
	\max_{(\tilde{x}_i,\overline{x}_i)} \quad & \overline{x}_i(r_i) \nonumber\\
	\mbox{s.t.}\quad &\forall v_i'\neq v_i'', r_i'\subseteq r_i,\nonumber\\
	&\overline{x}_i(v_i',r_i)=\overline{x}_i(v_i'',r_i) ,& \tag{1}\\
	&\tilde{x}_i(r_i)=\overline{x}_i(r_i)+v_i^*(r_i),&\tag{2}\\
	&\overline{x}_i(r_i)+v_i^*(r_i)\leq \overline{x}_i(r_i')+v_i^*(r_i'),&\tag{3}\\
	&\overline{x}_i(\emptyset)\leq 0.&\tag{4}
	\end{alignat}
	Since $v_i^*(r_i)$ is independent of $i$'s bid, constraints (1) and (2) are equivalent to P2. Constraints (2) and (3) together with Lemma \ref{cbm} ensure that $\tilde{x}_i$ and $\overline{x}_i$ are diffusion-monotonic. Constraint (4) is from P5. Hence, constraints (1)-(4) are equivalent to P2-P5, provided that $\pi$ is monotonic. This means a solution to the above system is an implementable allocation that maximizes the seller's revenue.
	
According to constraint (3), we must have $\overline{x}_i(r_i)+v_i^*(r_i)\leq \overline{x}_i(\emptyset)+v_i^*(\emptyset)$. Rearranging the order, we have $$\overline{x}_i(r_i)\leq \overline{x}_i(\emptyset)+v_i^*(\emptyset)-v_i^*(r_i).$$ Note that the left side of this inequality is just one of our maximization subproblems. Since $\overline{x}_i(\emptyset)\leq 0$ according to constraint (4), we know that $\overline{x}_i(r_i)$ is bounded by $v_i^*(\emptyset)-v_i^*(r_i)$ which is a constant with respect to an allocation policy $\pi$ and ${\bf t}$. Our finding is that $$\overline{x}_i(r_i)=v_i^*(\emptyset)-v_i^*(r_i)$$ and $$\tilde{x}_i(r_i)=\overline{x}_i(r_i)+v_i^*(r_i)=v_i^*(\emptyset)$$ is a feasible solution and hence the optimal one. To see this, notice that $\overline{x}_i(\emptyset)$ is always $0$ when $\overline{x}_i(r_i)=v_i^*(\emptyset)-v_i^*(r_i)$. Therefore constraint (4) is satisfied. Since $v_i^*(r_i)$ is independent of $v_i'$ and is also nondecreasing in $r_i$, constraints (1) and (3) are satisfied. Finally, the maximized value of $\overline{x}_i(r_i)$ and the corresponding value of $\tilde{x}_i(r_i)$ are $\overline{x}_i(r_i)=v_i^*(\emptyset)-v_i^*(r_i)$ and
$\tilde{x}_i(r_i)=v_i^*(\emptyset)$, respectively. That is, charging the winner $v_i^*(\emptyset)$ and the losers $v_i^*(\emptyset)-v_i^*(r_i)$ will maximize the seller's revenue. Based on Proposition \ref{Payment-Decoupling}, we have $x^*_i=\tilde{x}_i(r_i)\pi_i({\bf t})+\overline{x}_i(r_i)(1-\pi_i({\bf t}))=v^*_i(\emptyset)-v^*_i(r_i)(1-\pi_i({\bf t}))$. 
Now we complete the proof.
\end{proof}
Under the above framework, it is not difficult to verify that the allocation policies of the recently proposed diffusion auction mechanisms \cite{li2017mechanism,Li2018CustomerSI,li2019graph} are all monotonic and their payment policies implicitly satisfy the form in Theorem \ref{mam}. Especially, the classic VCG mechanism uses an efficient allocation that is monotonic. \cite{li2017mechanism} proved that under the diffusion auction scenario, the VCG mechanism is not budget-balanced, meaning that the seller may have a large deficit. We emphasize that the payment policy in the VCG mechanism also satisfies the optimality in Theorem \ref{mam}. Then we know that if the allocation policy is efficient, then any IR, IC diffusion auction is not weakly budget-balanced.
\begin{prop}\label{vcg_bb}
	There exists no IR and IC diffusion auction which is efficient and (weakly) budget-balanced.
\end{prop}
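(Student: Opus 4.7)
The plan is to use Theorem~\ref{mam} as a revenue-upper-bound device. First I would check that any efficient allocation $\pi^{\text{eff}}$ is monotonic in the sense of $\succeq_{t_i}$: if $i$ wins at $(v_i,r_i)$ she is the highest-valued participant in the reachable subgraph; raising $v_i$ keeps her on top, and shrinking $r_i$ can only remove participants whose values were at most $v_i$, so $i$ remains the social-welfare maximiser and continues to win. With monotonicity in hand, Theorem~\ref{mam} says that among all IR and IC payment policies paired with $\pi^{\text{eff}}$, the revenue on profile ${\bf t}$ is maximised by the $x^*$ of that theorem, and this maximum equals $v_w^*(\emptyset)+\sum_{i\neq w}\bigl(v_i^*(\emptyset)-v_i^*(r_i)\bigr)$, where $w$ is the efficient winner.

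Hence it suffices to exhibit a single instance on which this upper bound is strictly negative; on that instance every IR and IC mechanism with efficient allocation would generate at most the same (negative) revenue, contradicting weak budget balance. I would take a minimal counterexample: the three-node line $s$--$1$--$2$ with buyer $2$ a leaf and values $v_1<v_2$. Under truthful full diffusion buyer~$2$ is the efficient winner. The critical-bid computations give $v_2^*(\emptyset)=v_1$ (diffusion by buyer~$2$ reaches nobody new, and beating buyer~$1$ is the binding constraint), $v_1^*(\{2\})=v_2$ (buyer~$1$ must outbid the diffused buyer~$2$), and $v_1^*(\emptyset)=0$ (without diffusing, buyer~$1$ is the sole participant and wins at any non-negative bid). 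Substituting into the formula yields $v_1+(0-v_2)=v_1-v_2<0$, which establishes the claim.

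The main, and essentially only, obstacle is the monotonicity check for efficient allocations; once that is in place, Theorem~\ref{mam} reduces the problem to the routine critical-bid calculation above. A minor subtlety is the tie at zero when buyer~$1$ is alone, which one can resolve either by the convention that any non-negative bid wins or by perturbing the values by a vanishing $\epsilon$, neither of which affects the strict sign of $v_1-v_2$.
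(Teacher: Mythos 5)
Your proof is correct, but it takes a different route from the paper's. The paper argues via the VCG mechanism: Lemma~\ref{vcg_p} shows that VCG's payments coincide exactly with the optimal payment $x^*$ of Theorem~\ref{mam} (winner pays $v_i^*(\emptyset)$, losers pay $v_i^*(\emptyset)-v_i^*(r_i)$), so VCG already extracts the maximum revenue achievable by any IR and IC mechanism with an efficient allocation; it then cites the result of Li et al.\ (2017) that VCG on networks can run a deficit, and concludes no such mechanism can be weakly budget-balanced. You instead bypass VCG entirely: you use Theorem~\ref{mam} purely as a per-profile revenue upper bound $v_w^*(\emptyset)+\sum_{i\neq w}\bigl(v_i^*(\emptyset)-v_i^*(r_i)\bigr)$ and exhibit an explicit instance (the line $s$--$1$--$2$ with $v_1<v_2$) where this bound equals $v_1-v_2<0$, so every IR and IC payment paired with the efficient allocation yields negative revenue there. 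Your critical-bid computations ($v_2^*(\emptyset)=v_1$, $v_1^*(\{2\})=v_2$, $v_1^*(\emptyset)=0$) are correct, and your monotonicity check for efficient allocations is a sketch of what the paper simply asserts (``clearly monotonic'') inside Lemma~\ref{vcg_p}. What your route buys is self-containment: the deficit is demonstrated rather than imported from the cited paper, and the tie-breaking caveat is handled explicitly. What the paper's route buys is brevity and an observation of independent interest, namely that VCG's payment rule is itself the revenue-optimal one for the efficient allocation, which your argument does not surface.
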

Before proving this proposition, we first show that the payment policy of the VCG mechanism follows Theorem \ref{mam}.
\begin{lemma}\label{vcg_p}
	In the VCG mechanism, each buyer's payment satisfies the from in Theorem \ref{mam}.
\end{lemma}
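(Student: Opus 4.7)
The plan is to unfold the VCG payment formula in the diffusion-auction setting and match it, case by case, with the closed form $x_i^* = v_i^*(\emptyset) - v_i^*(r_i)(1-\pi_i({\bf t}))$ from Theorem \ref{mam}. Recall that for each buyer $i$, the VCG payment is $p_i^{\mathrm{VCG}}({\bf t}) = W^*_{-i} - \sum_{j\neq i} \pi_j({\bf t})\,v_j$, where $W^*_{-i}$ is the maximum social welfare achievable among the buyers other than $i$ when $i$ is removed from the graph (so her incident edges vanish and no information flows through her), while the second term is the welfare enjoyed by the other buyers under the efficient allocation when $i$ is present.

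The pivotal intermediate step is the identification $W^*_{-i} = v_i^*(\emptyset)$. When $i$ is deleted from the graph, the set of buyers reachable from the seller $s$ is exactly the set of buyers reachable from $s$ when $i$ participates but reports $r_i' = \emptyset$; in both scenarios, no information is passed through $i$ to her neighbors. Since the allocation is efficient, $W^*_{-i}$ equals the maximum bid among these competing buyers. On the other hand, $v_i^*(\emptyset)$ is by definition the smallest bid with which $i$ can beat exactly those competitors under the efficient rule, so it too equals that maximum bid. Hence $W^*_{-i} = v_i^*(\emptyset)$.

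With this identification in hand, the two cases are immediate. If $i$ wins under the efficient allocation, then $\sum_{j\neq i}\pi_j v_j = 0$, so $p_i^{\mathrm{VCG}} = v_i^*(\emptyset)$, matching $x_i^*$ since $\pi_i({\bf t}) = 1$. If $i$ loses, the item goes to some $w\neq i$ with $\pi_w = 1$ and $\sum_{j\neq i}\pi_j v_j = v_w$. Because $w$ is the efficient winner in the full graph, $v_w$ is the maximum bid among all buyers other than $i$ reachable from $s$, which is precisely $v_i^*(r_i)$. Thus $p_i^{\mathrm{VCG}} = v_i^*(\emptyset) - v_i^*(r_i)$, again matching $x_i^*$ since $\pi_i({\bf t}) = 0$. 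Stitching the two cases together via Proposition \ref{Payment-Decoupling} yields the claimed formula.

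The one place where care is needed is the identification $W^*_{-i} = v_i^*(\emptyset)$: one must check that ``removing $i$ from the graph'' and ``letting $i$ bid while diffusing to nobody'' induce the same set of competitors. This is essentially a graph-reachability observation---no buyer other than $i$ can be informed via a path through $i$ once her outgoing edges are cut, and symmetrically no such buyer is informed when $i$ withholds diffusion---but it is precisely the point at which the diffusion structure enters the argument, and it is what makes the VCG payment coincide with the optimal payment form of Theorem \ref{mam}.
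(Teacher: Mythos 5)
Your proof is correct and follows essentially the same route as the paper: unfold the VCG payment, identify the welfare without $i$ with $v_i^*(\emptyset)$ and the others' welfare under full participation with $v_i^*(r_i)$ when $i$ loses, then split into winner/loser cases. The only difference is that you spell out the graph-reachability justification for $W^*_{-i}=v_i^*(\emptyset)$, which the paper simply asserts.
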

\begin{proof}
	In the VCG mechanism, the highest bidder wins the item. Clearly, this allocation policy is monotonic. Each buyer $i$ in the VCG mechanism pays $$x_i^{vcg}({\bf t}) = W(\pi^*, {\bf t}\setminus\{t_i\}) - (W(\pi^*, {\bf t})- \pi_i^*({\bf t})v_i),$$ which is interpreted as the social welfare decrease of others caused by $i$'s participation. Since there is only one item for sale, then $W(\pi^*, {\bf t}\setminus\{t_i\})$ is the highest bid without $i$'s participation and it exactly equals $v_i^*(\emptyset)$. If $i$ does not win, then the second term $W(\pi^*, {\bf t})- \pi_i^*({\bf t})v_i =W(\pi^*, {\bf t})=v^*_i(r_i)$. Otherwise, the second term is $0$. Hence, for any buyer $i$, $$\tilde{x}^{vcg}_i=W(\pi^*, {\bf t}\setminus\{t_i\})=v_i^*(\emptyset)$$ and $$\overline{x}^{vcg}_i=W(\pi^*, {\bf t}\setminus\{t_i\})-W(\pi^*, {\bf t})=v_i^*(\emptyset)-v_i^*(r_i).$$ Such a payment policy matches the form in Theorem \ref{mam}.
\end{proof}
\begin{proof}[Proof of Proposition \ref{vcg_bb}]
	According to Lemma \ref{vcg_p} and Theorem \ref{mam}, we know that the payment policy in the VCG mechanism maximizes the seller's revenue. Since the allocation policy is efficient and \cite{li2017mechanism} shows that the VCG mechanism is not budget balanced, Proposition \ref{vcg_bb} is true.
\end{proof}
%

\section{Conclusions}
In this paper, we characterize a sufficient and necessary condition for all incentive-compatible and individually rational diffusion auctions.
We also propose a class of natural monotonic allocation policies, for which we obtain the corresponding optimal payment policy that maximizes the seller's revenue.
In practice, diffusion auctions can be applied in a recursive way: any new comer will be treated as a bidder if she submits a two-dimensional type. This recursion is terminated if there is no new submission. Based on the reports by all participants, the seller can build the social network and determines the allocation and the payments.

An immediate extension of this work is IC in diffusion auctions where the seller has multiple homogeneous items \cite{Zhao2018Multi,kawasaki2019strategy}. It is also important to investigate how the relaxation of IC constraints can improve the efficiency in diffusion auctions \cite{jeong2020groupwise}. Beyond the scope of diffusion auctions, our results could also be applied to general mechanisms with the following format: each agent $i$ is defined by two parameters $v_i$ and $z_i$. $v_i$ is $i$'s valuation function which is privately determined by a single parameter and $z_i$ is an environment-specific type which is independent of $v_i$ and $\emptyset\preceq z_i'\preceq z_i$ for any $z_i'\in \mathcal{Z}_i$, where $\mathcal{Z}_i$ is the misreport space of $z_i$ and $\preceq$ represents a partial order. For example, the well-known online mechanism design with no-early arrivals and no-late departures \cite{nisan2007algorithmic} essentially has similar underlying structures as diffusion auctions.

\newpage
\bibliographystyle{named}
\bibliography{tda}

\begin{thebibliography}{}

\bibitem[\protect\citeauthoryear{{Andelman} \bgroup \em et al.\egroup
  }{2007}]{andelman2005truthful}
Nir {Andelman}, Yossi {Azar}, and Motti {Sorani}.
\newblock Truthful approximation mechanisms for scheduling selfish related
  machines.
\newblock {\em Theory of Computing Systems}, 40(4):423--436, 2007.

\bibitem[\protect\citeauthoryear{Archer and Tardos}{2001}]{archer2001truthful}
Aaron Archer and {\'E}va Tardos.
\newblock Truthful mechanisms for one-parameter agents.
\newblock In {\em FOCS}, pages 482--491, 2001.

\bibitem[\protect\citeauthoryear{Auletta \bgroup \em et al.\egroup
  }{2004}]{auletta2004deterministic}
Vincenzo Auletta, Roberto De~Prisco, Paolo Penna, and Giuseppe Persiano.
\newblock Deterministic truthful approximation mechanisms for scheduling
  related machines.
\newblock In {\em STACS}, pages 608--619, 2004.

\bibitem[\protect\citeauthoryear{Bikhchandani \bgroup \em et al.\egroup
  }{2006}]{bikhchandani2006weak}
Sushil Bikhchandani, Shurojit Chatterji, Ron Lavi, Ahuva Mu'alem, Noam Nisan,
  and Arunava Sen.
\newblock Weak monotonicity characterizes deterministic dominant-strategy
  implementation.
\newblock {\em Econometrica}, 74(4):1109--1132, 2006.

\bibitem[\protect\citeauthoryear{Briest \bgroup \em et al.\egroup
  }{2005}]{briest2005approximation}
Patrick Briest, Piotr Krysta, and Berthold V{\"o}cking.
\newblock Approximation techniques for utilitarian mechanism design.
\newblock In {\em STOC}, pages 39--48, 2005.

\bibitem[\protect\citeauthoryear{{Clarke}}{1971}]{clarke1971multipart}
Edward~H. {Clarke}.
\newblock Multipart pricing of public goods.
\newblock {\em Public Choice}, 11(1):17--33, 1971.

\bibitem[\protect\citeauthoryear{{Groves}}{1973}]{groves1973incentives}
Theodore {Groves}.
\newblock Incentives in teams.
\newblock {\em Econometrica}, 41(4):617--631, 1973.

\bibitem[\protect\citeauthoryear{Jeong and Lee}{2020}]{jeong2020groupwise}
Seungwon~Eugene Jeong and Joosung Lee.
\newblock The groupwise-pivotal referral mechanism: Core-selecting referral
  strategy-proof mechanism.
\newblock {\em (To appear)}, 2020.

\bibitem[\protect\citeauthoryear{Kawasaki \bgroup \em et al.\egroup
  }{2019}]{kawasaki2019strategy}
Takehiro Kawasaki, Nathanael Barrot, Seiji Takanashi, Taiki Todo, and Makoto
  Yokoo.
\newblock Strategy-proof and non-wasteful multi-unit auction via social
  network.
\newblock {\em arXiv preprint arXiv:1911.08809}, 2019.

\bibitem[\protect\citeauthoryear{Kov{\'a}cs}{2005}]{kovacs2005fast}
Annam{\'a}ria Kov{\'a}cs.
\newblock Fast monotone 3-approximation algorithm for scheduling related
  machines.
\newblock In {\em ESA}, pages 616--627, 2005.

\bibitem[\protect\citeauthoryear{Lavi \bgroup \em et al.\egroup
  }{2003}]{lavi2003towards}
Ron Lavi, Ahuva Mu'Alem, and Noam Nisan.
\newblock Towards a characterization of truthful combinatorial auctions.
\newblock In {\em FOCS}, pages 574--583, 2003.

\bibitem[\protect\citeauthoryear{Lehmann \bgroup \em et al.\egroup
  }{2002}]{lehmann2002truth}
Daniel Lehmann, Liadan~Ita O{\'c}allaghan, and Yoav Shoham.
\newblock Truth revelation in approximately efficient combinatorial auctions.
\newblock {\em Journal of the ACM}, 49(5):577--602, 2002.

\bibitem[\protect\citeauthoryear{Li \bgroup \em et al.\egroup
  }{2017}]{li2017mechanism}
Bin Li, Dong Hao, Dengji Zhao, and Tao Zhou.
\newblock Mechanism design in social networks.
\newblock In {\em AAAI}, pages 586--592, 2017.

\bibitem[\protect\citeauthoryear{Li \bgroup \em et al.\egroup
  }{2018}]{Li2018CustomerSI}
Bin Li, Dong Hao, Dengji Zhao, and Tao Zhou.
\newblock Customer sharing in economic networks with costs.
\newblock In {\em IJCAI}, pages 368--374, 2018.

\bibitem[\protect\citeauthoryear{Li \bgroup \em et al.\egroup
  }{2019}]{li2019graph}
Bin Li, Dong Hao, Dengji Zhao, and Makoto Yokoo.
\newblock Diffusion and auction on graphs.
\newblock In {\em IJCAI}, pages 435--441, 2019.

\bibitem[\protect\citeauthoryear{Mu'Alem and Nisan}{2008}]{mu2008truthful}
Ahuva Mu'Alem and Noam Nisan.
\newblock Truthful approximation mechanisms for restricted combinatorial
  auctions.
\newblock {\em Games and Economic Behavior}, 64(2):612--631, 2008.

\bibitem[\protect\citeauthoryear{{Myerson}}{1979}]{myerson1979incentive}
Roger~B. {Myerson}.
\newblock Incentive compatibility and the bargaining problem.
\newblock {\em Econometrica}, 47(1):61--73, 1979.

\bibitem[\protect\citeauthoryear{Myerson}{1981}]{Myerson1981OptimalAD}
Roger~B. Myerson.
\newblock Optimal auction design.
\newblock {\em Mathematics of Operations Research}, 6:58--73, 1981.

\bibitem[\protect\citeauthoryear{Nisan \bgroup \em et al.\egroup
  }{2007}]{nisan2007algorithmic}
Noam Nisan, Tim Roughgarden, {\'E}va Tardos, and Vijay~V Vazirani.
\newblock {\em Algorithmic game theory}, volume~16.
\newblock Cambridge University Press, 2007.

\bibitem[\protect\citeauthoryear{Roberts}{1979}]{roberts1979characterization}
Kevin Roberts.
\newblock The characterization of implementable choice rules.
\newblock {\em Aggregation and Revelation of Preferences}, 12(2):321--348,
  1979.

\bibitem[\protect\citeauthoryear{Rochet}{1987}]{rochet1987necessary}
Jean-Charles Rochet.
\newblock A necessary and sufficient condition for rationalizability in a
  quasi-linear context.
\newblock {\em Journal of Mathematical Economics}, 16(2):191--200, 1987.

\bibitem[\protect\citeauthoryear{Shmoys and
  Tardos}{1993}]{shmoys1993approximation}
David~B Shmoys and {\'E}va Tardos.
\newblock An approximation algorithm for the generalized assignment problem.
\newblock {\em Mathematical Programming}, 62(3):461--474, 1993.

\bibitem[\protect\citeauthoryear{Vickrey}{1961}]{vickrey1961counterspeculation}
William Vickrey.
\newblock Counterspeculation, auctions, and competitive sealed tenders.
\newblock {\em The Journal of Finance}, 16(1):8--37, 1961.

\bibitem[\protect\citeauthoryear{Zhao \bgroup \em et al.\egroup
  }{2018}]{Zhao2018Multi}
Dengji Zhao, Bin Li, Junping Xu, Dong Hao, and Nicholas~R. Jennings.
\newblock Selling multiple items via social networks.
\newblock In {\em AAMAS}, pages 68--76, 2018.

\end{thebibliography}

\end{document}